\newtheorem{theorem}{Theorem}[section]
\newtheorem{proposition}{Proposition}[section]
\newcommand{\qed}{\hfill $\Box$ \bigbreak}
\newenvironment{proof}{\noindent {\bf Proof.}}{\qed}
\newcommand{\remove}[1]{}
\begin{document}

\baselineskip  0.19in 
\parskip     0.05in 
\parindent   0.3in 

\title{{\bf Deterministic Treasure Hunt and Rendezvous in Arbitrary Connected Graphs
 }}
\date{}
\newcommand{\inst}[1]{$^{#1}$}

\author{
Debasish Pattanayak\inst{1},
Andrzej Pelc\inst{1}$^,$\footnote{Partially supported by NSERC discovery grant 2018-03899 and by the Research Chair in Distributed Computing at the Universit\'e du Qu\'{e}bec en Outaouais.}\\
\inst{1} Universit\'{e} du Qu\'{e}bec en Outaouais, Gatineau, Canada.\\
E-mails: \url{ drdebmath@gmail.com}, \url{ pelc@uqo.ca}\\
}

\maketitle

\begin{abstract}
Treasure hunt and rendezvous are fundamental tasks performed by mobile agents in graphs. 
In treasure hunt, an agent has to find an inert target (called treasure) situated at an unknown node of the graph. 
In rendezvous, two agents, initially located at distinct nodes of the graph,
traverse its edges in synchronous rounds and
have to meet at some node. 
We assume that the graph is connected (otherwise none of these tasks is feasible) and consider deterministic treasure hunt and rendezvous algorithms.
The time of a treasure hunt algorithm is the
worst-case number
of edge traversals performed by the agent until the treasure is found.
The time of a rendezvous algorithm is the worst-case number of rounds
since the wakeup of the earlier agent until the meeting. 

To the best of our knowledge, all known treasure hunt and rendezvous algorithms rely on the assumption that degrees of all nodes are finite, even when the graph itself may be infinite. In the present paper we remove this assumption for the first time, and consider both above tasks in arbitrary connected graphs whose nodes can have either finite or countably infinite degrees. Our main result is the first universal treasure hunt algorithm working for arbitrary connected graphs. We prove that the time of this algorithm has optimal order of magnitude among all possible treasure hunt algorithms working for arbitrary connected graphs. 
 As a consequence of this result we obtain the first universal rendezvous algorithm working for arbitrary connected graphs. The time of this algorithm is polynomial in a lower bound holding in many graphs, in particular in the tree all of whose degrees are infinite. 
\end{abstract}



\textbf{keywords.} treasure hunt, rendezvous, deterministic algorithm, mobile agent, time






\section{Introduction}

\subsection{The background}
Treasure hunt and rendezvous are related fundamental tasks performed by mobile entities, called agents, in graphs. 
In treasure hunt, an agent has to find an inert target (called treasure) located at an unknown node of the graph. 
In rendezvous, two agents, initially located at distinct nodes of the graph,
traverse its edges in synchronous rounds and
have to meet at some node. In both cases we assume that the graph is connected, otherwise these tasks cannot be accomplished if the initial location of the agent and the target (in case of treasure hunt) and the initial locations of the agents (in case of rendezvous) are in different components.
Both these tasks have numerous applications. Mobile entities may model software agents navigating in a computer network. In this application, an example of the task of treasure hunt is finding some data hidden in an unknown node of the network. Meeting of software agents may be important to exchange data previously found by each of them. In robotics applications, agents may model mobile robots navigating in a network of corridors forming a graph. Then an example of the task of treasure hunt is finding a person stranded in a building after an accident, and meeting of mobile robots may be necessary to exchange information about parts of the environment already visited by each of them.

To the best of our knowledge, all known treasure hunt and rendezvous algorithms rely on the assumption that degrees of all nodes are finite, even when the graph itself may be infinite. In the present paper we remove this assumption for the first time, and consider both above tasks in arbitrary connected graphs whose nodes can have either finite or countably infinite degrees. It turns out that removing the finite degree assumption significantly changes the problem of finding efficient treasure hunt and rendezvous algorithms.

From a practical point of view, our approach has the advantage of improving complexity of treasure hunt and rendezvous for graphs with nodes of finite but very large degrees. Such graphs occur in social networks, or in dynamic webpages subject to crawling by software agents. In  subsection \ref{Our results} we explicitly mention such improvements.

\subsection
{The model and the problems}
We consider arbitrary undirected connected\footnote{A graph is connected, if every pair of nodes is joined by at least one finite path.} graphs (called {\em graphs} in the sequel) whose nodes are either finite or countably infinite\footnote{The restriction to finite or countably infinite degrees is necessary: in graphs with some nodes of uncountable degrees neither treasure hunt nor rendezvous can be executed in worst-case finite time.}. They may have self-loops and/or multiple edges. Port numbers at any node of finite degree $d$ are arbitrarily numbered $1, \dots,d$. Port numbers at any node of infinite degree are arbitrarily numbered by all distinct positive integers. We do not make any assumptions on the existence of node labels (i.e., we assume that graphs are anonymous).
Nodes cannot be marked by the agents in any way.
Agents have unbounded memory: from the computational perspective they are modelled as Turing machines. 

$\bullet$ Treasure hunt.\\
An instance of the treasure hunt problem is an ordered pair $(b,t)$, where $b$, called the {\em base}, is the starting node of the agent, and $t$ is a node which is the location of the treasure.
The agent has no a priori knowledge whatsoever.
At each step, the agent chooses a port at the current node and traverses the corresponding edge.
When an agent enters a node, it learns the entry port number.  
The time of a treasure hunt algorithm is the
worst-case number
of edge traversals performed by the agent until the treasure is found.
A {\em universal treasure hunt algorithm} is an algorithm that solves the treasure hunt problem for every instance on every connected graph.

$\bullet$ Rendezvous.\\
Agents have distinct labels drawn from the set $\{1,\dots, L\}$.
An instance of the rendezvous problem is a set $\{(v_1,\ell_1),(v_2,\ell_2)\}$, where $v_i$ are the starting nodes of the agents and $\ell_i$ are their labels.
Each agent knows a priori only its own label.
Agents execute the same deterministic algorithm with a parameter that is the agent's label.
Agents move in synchronous rounds.
In each round, each agent either stays idle at the current node, or chooses a port at the current node
and traverses the corresponding edge.
If agents cross each other, traversing simultaneously the same edge in opposite directions, they do not even notice this fact.
When an agent enters a node, it learns the entry port number. 
Agents are woken up in possibly different rounds by an adversary. 
The time of a rendezvous algorithm is the worst-case number of rounds
since the wakeup of the earlier agent until the meeting. 
A {\em universal rendezvous algorithm} is an algorithm that solves the rendezvous problem for every instance on every connected graph.

\subsection
{Terminology}
A (finite) path from node $u$ to node $v$ in a graph $G$ is a sequence $(p_1,\dots,p_\delta)$ of port numbers, such that taking port number $p_1$ at node $u$ and then taking port numbers $p_2,\dots,p_\delta$ at subsequent nodes, reaches node $v$.
Consider a path $\pi = (p_1, p_2, \dots, p_\delta)$ starting at a node $v_0$ of a graph. Let $v_0,v_1, v_2, \dots, v_\delta$ be the nodes such that
port $p_i$ at $v_{i-1}$ leads to $v_i$, for  $i=1,\dots,\delta$. We define the {\em reverse} of the path $\pi$ as the path $rev(\pi) = (q_\delta, q_{\delta-1}, \dots, q_1)$ such that port $q_i$ at node $v_i$ leads to $v_{i-1}$, for  $i=1,\dots,\delta$. 
The length $\delta$ of path $\pi$ is denoted by $|\pi|$.
Since an agent learns the entry port upon visiting a node, it learns path 
$rev(\pi)$ after traversing path $\pi$.

Consider an ordered couple $(u,v)$ of nodes and a path $\pi=(p_1,\dots,p_\delta)$ from $u$ to $v$. The {\em type} of $\pi$, denoted $\tau(\pi)$,
is the ordered couple $(m,\delta)$, where $m=\max(p_1,\dots,p_\delta)$. 
The {\em value} $val(x,y)$ of any pair $(x,y)$ of natural numbers is defined as $y2^yx^y$.
The {\em character} of a couple $(u,v)$ of nodes, denoted $c(u,v)$, is  defined as follows.
Let $minval$ be the smallest value of any type of a path from $u$ to $v$. 
$c(u,v)$ is  the lexicographically smallest among all types of value $minval$. Finally, the {\em weight} of a couple $(u,v)$ of nodes, denoted $w(u,v)$, is defined as the value of $c(u,v)$.
Notice that $w(u,v)$ is not necessarily equal to $w(v,u)$.
As an example, consider a pair $(u,v)$ of nodes that has four paths from $u$ to $v$: $\pi_1=(2,1,2,1)$, $\pi_2=(3,10)$, $\pi_3=(4,3)$, and $\pi_4=(64)$. We have $\tau(\pi_1)=(2,4)$, $\tau(\pi_2)=(10,2)$, $\tau(\pi_3)=(4,2)$, $\tau(\pi_4)=(64,1)$. The value of $(2,4)$ is 1024, the value of 
$(10,2)$ is 800, the value of $(4,2)$ is 128, and the value of $(64,1)$ is 128. Hence the types with the smallest value 128, which is $minval$ of $(u,v)$, are 
$\tau(\pi_3)=(4,2)$ and $\tau(\pi_4)=(64,1)$. Thus the lexicographically smallest among them, i.e., $(4,2)$ is $c(u,v)$, and the weight $w(u,v)$ is the value of this character, i.e., 128. The notion of weight of a couple of nodes is crucial to the formulation of our results.


\subsection
{Our results}\label{Our results}
 Our main result is the first universal treasure hunt algorithm. We prove that the time of this algorithm has optimal order of magnitude among all possible universal treasure hunt algorithms. More precisely, the complexity of the algorithm for an instance $(b,t)$ of treasure hunt in any graph $G$ is $O(w(b,t))$.We show that this complexity is optimal.


As an application, we obtain the first universal rendezvous algorithm.
This algorithm is based on our universal treasure hunt algorithm and on the rendezvous algorithm from \cite{BP}. 
For any instance
 of rendezvous in any connected graph $G$, where $v_1$ and $v_2$ are the initial positions of the agents, and agents have distinct labels belonging to the set $\{1,\dots,L\}$,
our algorithm has time polynomial in $\max(W(v_1,v_2),\log L)$, where \sloppy$W(v_1,v_2)=\max(w(v_1,v_2),w(v_2,v_1))$. On the other hand, we show that $\Omega(\max(W(v_1,v_2),\log L))$ is a lower bound on the time of any universal rendezvous algorithm, which holds, e.g., in the infinite tree $T$ with all nodes of infinite degrees.

It should be noted that the rendezvous algorithm from \cite{BP} worked only for connected graphs all of whose nodes have finite degrees (thus the title of \cite{BP}
``How to meet at a node of any connected graph'' was somewhat overstated). The analysis of this algorithm used the assumption of degree finiteness in a crucial way.
Indeed, for any node $v$ and any positive integer $r$, the authors defined $P(v,r)$ to be the number of paths of length $r$ in the graph, starting at node $v$.
For any instance of the rendezvous problem, where agents start at nodes $v_1$ and $v_2$ at distance $D$, the authors  defined $P(v_1,v_2,D)=\max(P(v_1,D),P(v_2,D))$ and showed that the time of their algorithm is polynomial in $\max(P(v_1,v_2,D),D ,\log L)$, while
 $\Omega(\max(P(v_1,v_2,D),D ,\log L))$ is a lower bound on rendezvous time in trees. Of course, in case of infinite degrees,  $P(v_1,v_2,D)$ may be infinite, so this result becomes meaningless. 
 
 While our rendezvous algorithm is designed to work even for graphs with nodes of countably infinite degrees, it also offers significant improvements over \cite{BP} for many instances in graphs with finite but very large node degrees. 
For instances of rendezvous, such that $L$ and $W(v_1,v_2)$ are constant,  in graphs with nodes of finite but unbounded degrees, our algorithm has constant time, while the time of the algorithm from  \cite{BP} is unbounded.

\subsection
{Related work}
Treasure hunt and rendezvous are related problems intensely studied in the last few decades.
The task of treasure hunt has been investigated in the line \cite{BCD,HIKL}, in the plane \cite{BCR,La} and in terrains with obstacles \cite{LS}. In \cite{PY}, the authors studied the minimum amount of information required by an agent in order to perform efficient treasure hunt in geometric terrains.  Treasure hunt in finite anonymous graphs was studied in \cite{TSZ07,Xin} with the goal of minimizing time.
Treasure hunt in connected, possibly infinite, graphs with labeled nodes of finite degrees was investigated in~\cite{ABRS,BDLP}.
An optimal randomized algorithm for treasure hunt in a line is presented in \cite{KRT}.
In \cite{KKKS}, the authors considered expected competitive ratio for randomized on-line treasure hunt.

The problem of rendezvous in graphs was investigated under randomized and deterministic scenarios.
It is usually assumed that the nodes do not have distinct identities.
A survey of  randomized treasure hunt and rendezvous in various models  can be found in
\cite{alpern02b}. Deterministic rendezvous in graphs was surveyed in \cite{Pe2}.
Gathering more than two agents in the plane was studied, e.g., in \cite{fpsw}. 
The related problem of approach in the plane, where agents have to get at distance 1, was studied in \cite{BBDDP}.

Most of the literature on rendezvous considered finite graphs and assumed the synchronous scenario, where
agents move in rounds. 
In \cite{MP} the authors studied tradeoffs between the amount of information a priori available to the agents and time of  treasure hunt and of rendezvous. 
 In \cite{TSZ07}, the authors presented rendezvous algorithms with time polynomial in the size of the graph and the length of agent's labels.
Gathering many agents in the presence of Byzantine agents was discussed in \cite{BDL}.
The amount of memory required by the agents to achieve deterministic rendezvous was studied in  \cite{CKP} for general graphs.

Fewer papers were devoted to synchronous rendezvous in infinite graphs. In \cite{CCGKM}, the authors considered rendezvous in infinite trees and grids but they used the strong assumption that the agents know their location in the environment. In \cite{BP2}, the authors considered rendezvous in infinite trees but in the case of unoriented trees they limited attention to regular trees. The paper \cite{BP}, most closely related to the present paper, was discussed previously. In particular, our rendezvous algorithm is partially based on algorithm {\tt RV} from \cite{BP}.  In all the above papers it was assumed that all nodes in graphs under consideration have finite degrees.

Several authors investigated asynchronous rendezvous in the plane \cite{CFPS,fpsw} and in graphs \cite{BCGIL,DPV}.

\section{Universal Treasure Hunt}

In this section, we describe and analyse a universal algorithm for the task of treasure hunt in arbitrary connected graphs. 
The main idea is to try to traverse all possible paths starting at the base (some paths may not exist) in non-decreasing order of values of their types. Types of the same value and paths of the same type could be ordered arbitrarily. For convenience, we choose the lexicographic order in both cases. It turns out that the non-decreasing order of values of path types guarantees  optimal complexity of universal treasure hunt.

Now we give a detailed description of the algorithm. First, we describe
two auxiliary procedures. Procedure Traverse($\pi$), for a given path $\pi = (p_1, p_2, \dots, p_\delta)$, starts and ends at the base node of the agent. Its aim is to follow consecutive ports $p_i$, as long as they are available at subsequently visited nodes, and then take the reverse path, in order to get back to the base. For example, if 
$\pi=(5,2,3,5,4,1)$ but the fourth visited node has degree 3, so it has no port 5, then the agent  backtracks from this node to the base.
Procedure Traverse($\pi$) lasts at most $2|\pi|$ steps.

The next procedure is Procedure Paths$(m, \delta)$, for any integers $m\geq 2$ and $\delta>0$. 
It starts and ends at the base node of the agent. Its aim is to  traverse 
all the paths of type $(m, \delta)$ in lexicographic order, using Procedure Traverse,  and get back to the base.

Let $\Pi$ be the set of all paths of type $(m, \delta)$.
To construct $\Pi$, take all possible sequences of length $\delta$ of port numbers $(p_1, p_2, \dots, p_\delta)$, such that $p_i \leq m$ for all $i$. Then for each such sequence, find $\kappa = \max(p_1, p_2, \dots, p_\delta)$. Add the sequence to $\Pi$ if $\kappa = m$. The number of such sequences is, precisely, $|\Pi| = m^\delta - (m-1)^\delta$.
In Procedure Paths$(m, \delta)$, the agent stores the set $\Pi$ in its memory, and tries to traverse all paths in $\Pi$ in lexicographic order (some paths may not exist), using
Procedure Traverse($\pi$) to traverse each path.

Now we are ready to describe our main algorithm, called {\tt UTH} (for Universal Treasure Hunt). 
It works in phases corresponding to positive integers starting at 2. For any integer $j\geq 2$, the phase $j$ works as follows.
We determine all types $(m, \delta)$, for any integers $m\geq 2$ and $\delta>0$,  such that  $val(m, \delta)=j$,  and sort these types lexicographically.
Then the agent executes Procedure Paths$(m, \delta)$ for each type $(m, \delta)$ in the sorted order.
This completes phase $j$ and the next phase starts. 
Hence, Algorithm {\tt UTH} yields an ordering $(\pi_1^*,\pi_2^*,\dots)$ of all finite paths starting at the base of the agent.
Below is the pseudocode of this algorithm.
The execution of the algorithm stops as soon as the treasure is found.

\begin{algorithm}[H]
  \SetAlgoLined
  \SetKwBlock{Repeat}{repeat forever}{}
  $j\gets 1$\;
  \Repeat{
    $j\gets j + 1$\;
    Let $\Sigma$ be the set of all pairs ($x,y$) of integers, such that $j = y\cdot 2^y \cdot x ^ y$, $x\geq 2$, and $y>0$\;
    Let $\sigma _1,\dots, \sigma_s$ be the sequence of pairs from  $\Sigma$ ordered lexicographically\;
    \For{$i=1$ to $s$}{
      Execute Paths$(\sigma_i)$\;
    }
  }
  \caption{{\tt UTH}}\label{alg:uth}
\end{algorithm}

Now, we analyze the complexity of Algorithm {\tt UTH}.
First we need to discuss the parameters in terms of which this complexity is expressed. Observe that global parameters, such as the number of nodes or  edges in the graph, its diameter or its maximum degree, which are widely used in the literature concerning finite graphs, cannot be used in our scenario because, in case of infinite graphs, all these parameters could have infinite value. Another possibility, adopted, e.g., in  \cite{ ABRS,BDLP} and applicable even for infinite graphs but assuming that degrees of all nodes are finite, would be to choose a parameter based on the size of the ball of radius $\delta$ in the graph, where $\delta$ is the distance from the starting node of the agent to the treasure. However, since we are dealing with graphs whose nodes may have infinite degree, this approach is also impossible because even the number of nodes adjacent to the starting node can be infinite. Hence a different type of parameter is needed.

We express the complexity of Algorithm {\tt UTH} in terms of the weight $w(b,t)$ of an instance $(b,t)$ of the treasure hunt problem.
If $c(b,t)=(m,\delta)$ then $w(b,t)$ is the value of $c(b,t)$, i.e., $w(b,t) = \delta2^\delta m^\delta$.
The reason of using $w(b,t)$ as a parameter is the following.
The amount of time required to explore all paths of length $\delta$ with maximum port number $m$, starting from $b$,  is $2\delta m^\delta$.
Any exploration of all these paths includes exploration of all paths of length {\em at most} $\delta$ with maximum port number $m$, starting from $b$. The weight of an instance is defined in such a way that, for a given maximum port number, even 
if we compute the cumulative weight for all instances  corresponding to shorter paths, it is still smaller than the weight of the instance for the longest path. This is accomplished by adding a factor $2^{\delta-1}$ to the time of exploration.
\begin{theorem}
\label{thm:uth}
Algorithm {\tt UTH} finds the treasure in every connected graph, in time $O(w(b,t))$, where $w(b,t)$ is the weight of the instance $(b,t)$ of the treasure hunt problem.
\end{theorem}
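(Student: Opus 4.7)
The plan is to prove correctness first, then establish the $O(w(b,t))$ bound.

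For correctness, since $G$ is connected, there is a finite path from $b$ to $t$, so the character $c(b,t)=(m^*,\delta^*)$ is well-defined and is realised by an actual path $\pi^*$ from $b$ to $t$ with $\tau(\pi^*)=(m^*,\delta^*)$. The nested enumeration performed by Algorithm {\tt UTH} (phases ordered by $j=val(x,y)$, types ordered lexicographically within a phase, paths of a given type ordered lexicographically inside Procedure Paths) produces, in finite time, the sequence $(\pi_1^*,\pi_2^*,\dots)$ of all finite port-sequences starting from $b$ with maximum port at least $2$, and Procedure Traverse$(\pi)$ successfully reaches the endpoint of any $\pi$ that is actually realisable in $G$. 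Hence $\pi^*$ is eventually attempted and the treasure is discovered (at the latest) upon completing this attempt.

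For complexity, I would upper-bound the total number of edge traversals performed before $\pi^*$ is completed. Because phases run in increasing order of $j$ and $(m^*,\delta^*)$ lies in phase $W:=w(b,t)=val(m^*,\delta^*)$, this total is at most
\[
C\;:=\;\sum_{(m,\delta)\,:\,val(m,\delta)\,\leq\,W,\ m\geq 2,\ \delta\geq 1}\mathrm{cost}\bigl(\mathrm{Paths}(m,\delta)\bigr).
\]
Procedure Paths$(m,\delta)$ calls Traverse on each of the $|\Pi|=m^\delta-(m-1)^\delta$ candidate paths of type $(m,\delta)$, each traversal costing at most $2\delta$ edges, so $\mathrm{cost}(\mathrm{Paths}(m,\delta))\leq 2\delta\bigl(m^\delta-(m-1)^\delta\bigr)$.

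The key step is to show $C=O(W)$. Fix $\delta\geq 1$ and let $M_\delta$ be the largest $m\geq 2$ with $val(m,\delta)\leq W$; then $\delta\cdot 2^\delta\cdot M_\delta^\delta\leq W$, i.e.\ $M_\delta^\delta\leq W/(\delta\, 2^\delta)$. Summing over admissible $m$ telescopes cleanly:
\[
\sum_{m=2}^{M_\delta}\bigl(m^\delta-(m-1)^\delta\bigr)\;\leq\;M_\delta^\delta\;\leq\;\frac{W}{\delta\,2^\delta}.
\]
Multiplying by the $2\delta$ factor from Traverse yields a per-$\delta$ contribution of at most $W/2^{\delta-1}$, and summing this geometric series over $\delta\geq 1$ gives $C\leq 2W$, hence $C=O(w(b,t))$.

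The main obstacle, as I see it, is precisely this telescoping-plus-geometric-series step. A term-by-term bound that uses $2\delta m^\delta$ (rather than $2\delta(m^\delta-(m-1)^\delta)$) would already give $\Omega(W^2)$ for $\delta=1$, so one must exploit that the work of Paths$(m,\delta)$ depends on the \emph{increment} $m^\delta-(m-1)^\delta$ and telescopes over $m$. The factor $2^\delta$ built into the definition of $val$ is then exactly what makes the outer sum geometric in $\delta$ and collapses it to $O(W)$, in line with the authors' preview that the weight ``adds a factor $2^{\delta-1}$ to the time of exploration''.
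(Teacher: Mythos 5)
Your proposal is correct and follows essentially the same route as the paper's proof: the same telescoping of $\sum_m 2\delta(m^\delta-(m-1)^\delta)$ to $2\delta M_\delta^\delta \leq W/2^{\delta-1}$ for each fixed length $\delta$, followed by the same geometric series over $\delta$ collapsing to $2W$. The only difference is that you spell out the correctness argument more explicitly than the paper does, which is harmless.
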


\begin{proof}
Consider an instance $(b,t)$ of the treasure hunt problem, and let $c(b,t)=(m,\delta)$. Then $w(b,t) = \delta2^\delta m^\delta$.
Since the agent traverses paths in non-decreasing order of values of their types,  and it stops when it visits the node $t$, the total number of phases of the algorithm is $w(b,t) - 1$.
The total number of paths of type $(x, y)$ is precisely $x^y - (x-1)^y$. The time of traversing all paths of type $(x, y)$ is at most $2y (x^y - (x-1)^y)$.
By the time the agent starts traversing paths of type $(x,y)$, it has already traversed all paths of type $(x', y)$ where $x' < x$.
The total time of traversing all paths $\pi$ of length $y$, such that the value of $\tau(\pi)$ is at most $w(b,t) $, is
$$ \sum_{i = 1}^x 2y(i^y - (i-1)^y) = 2y x^y =  \frac{2(y2^yx^y)}{2^y} \leq \frac{2w(b,t) }{2^{y}} = \frac{w(b,t) }{2^{y-1}}.
$$

Consider all types $(x,y)$ with value at most $w(b,t) $.
Let $h$
be the largest integer $y$ such that $val(2,y)= y2^{2y} \leq w(b,t) $.
The largest  path length $y$ of such a type is at most $h$.
Hence, the total time of the first $w(b,t) $ phases of the algorithm is at most $\sum_{y=1}^h \frac{w(b,t) }{2^{y-1}} \leq 2w(b,t) $.
It follows that the time of the algorithm is at most $2w(b,t)  \in O(w(b,t) )$.
\end{proof}

Since Theorem \ref{thm:uth} asserts that Algorithm {\tt UTH} works in time $O(w(b,t) )$ for every connected graph, in order to prove that this algorithm has optimal time complexity among universal treasure hunt algorithms, it is enough to show one connected graph with the property that, for any universal treasure hunt algorithm, there exist instances of the treasure hunt problem of arbitrarily large weights, for which this algorithm has time at least proportional to these weights.

The following proposition shows that Algorithm {\tt UTH} has optimal time complexity among universal treasure hunt algorithms.

\begin{proposition}\label{lb}
For any universal treasure hunt algorithm $\mathcal{A}$ executed on the infinite tree $T$ with all nodes of infinite degree, there exist instances $(b,t)$ of the treasure hunt problem, with arbitrarily large weight,
such that the time required to find the treasure for the instance $(b,t)$, using algorithm $\mathcal{A}$, is at least  $w(b,t)/4$.
\end{proposition}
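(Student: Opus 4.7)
My plan is a pigeonhole argument on depth-$1$ nodes, exploiting the fact that each depth-$1$ neighbor of the base is reachable from $b$ only through a single ``bottleneck'' port. First I fix a convenient port labeling of the infinite tree $T$: root it at $b$; at $b$, let port $i$ lead to the $i$-th neighbor of $b$; at every non-root node $v$, let port $1$ lead to the parent of $v$ and port $i+1$ lead to the $i$-th child of $v$, for $i\geq 1$. This assigns every positive integer as a port at every vertex, in accordance with the model.

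The second step is to compute $w(b,t_m)$, where $t_m$ denotes the neighbor of $b$ reached via port $m$ at $b$. Since $T$ is a tree, the unique edge incident to $t_m$ that leaves the subtree rooted at $t_m$ is the edge $(b,t_m)$, accessed from $b$ via port $m$; hence every path from $b$ to $t_m$ uses port $m$ at $b$ at least once, and therefore has maximum port at least $m$. The length-$1$ path has type $(m,1)$ and value $2m$, whereas every other path has length $\geq 3$ and thus value at least $3\cdot 2^3\cdot m^3 > 2m$. Consequently $c(b,t_m)=(m,1)$ and $w(b,t_m)=2m$.

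The third step is a counting argument. Any walk in $T$ starting at $b$ that visits $k$ distinct depth-$1$ nodes requires at least $2k-1$ edge traversals, because a first visit to a depth-$1$ node can only occur by a step from $b$ (otherwise the walk would already contain an earlier visit to that node, obtained when descending into its subtree), so consecutive first-visits are separated by at least one step back to $b$. Hence after $N$ edge traversals $\mathcal{A}$ has visited at most $(N+1)/2$ depth-$1$ nodes. For any integer $N\geq 3$, consider the $N$ depth-$1$ nodes $\{t_m : N<m\leq 2N\}$, each with weight $2m\in (2N,4N]$; the bound above leaves at least $N-(N+1)/2=(N-1)/2\geq 1$ of them un-visited at step $N$. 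The adversary places the treasure at such an un-visited $t^*$, forcing $\mathcal{A}$ to take time at least $N+1>N\geq w(b,t^*)/4$. Since $N$ is arbitrary and $w(b,t^*)>2N$, instances of arbitrarily large weight are produced.

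The main obstacle is the weight identity $w(b,t_m)=2m$: one must rule out that some longer, non-simple path from $b$ to $t_m$ supplies a character of smaller value. The chosen labeling resolves this through the bottleneck-port observation, which forces the maximum port of every $b$-to-$t_m$ path to be at least $m$. Once this is in hand, the remainder is a clean pigeonhole that never needs to reason about vertices deeper than depth one or about alternative labelings.
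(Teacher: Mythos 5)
Your proposal is correct and follows essentially the same route as the paper: an adversary argument placing the treasure at a yet-unvisited neighbor of $b$ whose port number lies in $(N,2N]$, using the facts that $w(b,t_m)=2m$ and that visiting $k$ distinct neighbors of $b$ in the tree costs at least $2k-1$ traversals. Your only addition is an explicit justification of the identity $w(b,t_m)=2m$ via the bottleneck-port observation, which the paper asserts without proof.
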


\begin{proof}
Let $i$ be any positive integer.
For a given base node $b$ of $T$, consider instances $(b,t_1)$, $(b,t_2)$,\dots, $(b,t_{2i})$, such that $t_j$ is adjacent to $b$ and the port number at $b$ corresponding to the edge $\{b,t_j\}$ is $j$. For all $i+1 \leq j \leq 2i$, we have $w(b,t_j)=2j$ and hence $2(i+1)\leq w(b,t_j) \leq 4i$.
Let $t_r$ be the node among $t_{i+1}$,\dots, $t_{2i}$ that algorithm $\mathcal{A}$ visits last. The adversary puts the treasure in node $t_r$.
Hence the time of solving the treasure hunt problem by algorithm $\mathcal{A}$ for instance $(b,t_r)$ is at least $2i-1\geq i \geq w(b,t_r)/4$.
\end{proof}

Although the example of the tree $T$ from Proposition \ref{lb} and the chosen targets adjacent to the  base $b$ are enough to show optimality of 
Algorithm {\tt UTH}, many more such examples are possible to construct, so that Proposition \ref{lb} still holds for an appropriately changed constant replacing 1/4. For example, we could consider targets at a given constant distance from the base. Also, modifications of the tree $T$ by adding  edges between nodes at larger levels than that of the chosen targets, would not change the argument.

\section{Universal Rendezvous}

In this section we present the first rendezvous algorithm working for arbitrary connected graphs. It is based on our universal treasure hunt algorithm presented in the previous section, and on the rendezvous algorithm from \cite{BP} that worked for connected graphs with all nodes of finite degrees.

We first summarize the original Algorithm {\tt RV} from \cite{BP}. It has three parts.
Suppose that the starting node of the agent is $v$.
\begin{enumerate}
\item
Definition of the tape.\\
Given the label $\ell$ of the agent, first a finite binary sequence $Trans(\ell)$ of length $O(\log \ell)$ is defined. We do not need to get into the properties of this sequence. Then the infinite  binary sequence $Tape(\ell)$ is defined as the concatenation of infinitely many copies of $Trans(\ell)$. $Tape(\ell)$ is called  the {\em tape} of the agent with label $\ell$.
The $i$-th copy of $Trans(\ell)$ is called the $i$-th segment of $Tape(\ell)$. 

\item
Ordering of finite paths starting at the node $v$.\\
The following order $(\pi_1,\pi_2,\dots)$ of finite paths starting at a node $v$ is defined:
every path of smaller length precedes every path of larger length, and paths of a given length are ordered lexicographically.
\item
Processing consecutive bits of the tape.\\
For any positive integer $i$, define $a_i=3i^2$.
The $j$-th segment of $Tape(\ell)$ corresponds to the path $\pi_j$. Consecutive bits of $Tape(\ell)$ are processed by allocating time $a_i$ to the $i$-th bit of $Tape(\ell)$ and executing the following actions. If $b_i$ is the $i$-th bit of $Tape(\ell)$ located in the $j$-th segment of $Tape(\ell)$ then: 
\begin{itemize}
\item
if $b_i=1$ then the agent traverses path $\pi_j$, waits $a_i-2|\pi_j|$ rounds at the end of it, and traverses path $rev(\pi_j)$;
\item
if $b_i=0$ then the agent waits $a_i$ rounds.
\end{itemize}
\end{enumerate}
The algorithm is interrupted as soon as the agents meet.
By the design of the algorithm,  the agent starts and ends processing each bit of $Tape(\ell)$ at its starting node $v$.

If agents $A_1$ and $A_2$ start at nodes $v_1$ and $v_2$ respectively, the {\em critical segment} of $A_1$ is defined as follows. Let $\pi$ be the lexicographically smallest among shortest paths from $v_1$ to $v_2$. $\pi$ is called the {\em critical path} of $A_1$. The critical segment of the tape of agent $A_1$ is the segment of its tape corresponding to path $\pi$. The critical segment of the tape of agent $A_2$ is defined similarly. 

It is proved in \cite{BP} that, if agents execute Algorithm {\tt RV}, then there exists an agent $A$ such that rendezvous occurs by the end of the execution of the critical segment of this agent. (The allocation of time periods $a_i=3i^2$ is crucial in the proof). Let $k$ be the the index of the critical path of $A$ in the ordering $(\pi_1,\pi_2,\dots)$. Since each segment contains at most $c\log L$ bits, for some positive constant $c$, the number $N$ of all bits in the tape of $A$ until the end of the critical segment is at most $ck\log L$. This implies the following estimate from \cite{BP} on the time of Algorithm {\tt RV}:
$\sum_{j=1} ^N a_j=\sum_{j=1} ^N 3j^2=\frac{N(N+1)(2N+1)}{2}$.

The key observation permitting to generalize Algorithm {\tt RV} to a universal rendezvous algorithm working for arbitrary connected graphs is that the correctness and complexity of the algorithm does not depend on the ordering of finite paths starting at the node $v$: it depends on the index of the critical path of agent $A$ in the ordering. Of course the ordering  $(\pi_1,\pi_2,\dots)$ from Algorithm {\tt RV} cannot be used if degrees of nodes can be infinite because there are already infinitely many paths of length 1. So, in order to generalize Algorithm {\tt RV} we need to change its part 2.
Recall that Algorithm {\tt UTH} from the previous section yields the following ordering $(\pi_1^*,\pi_2^*,\dots)$ of finite paths starting at the node $v$.

Path $\pi_i^*$ precedes path $\pi_j^*$ if and only if:\\
$\bullet$ $val(\tau(\pi_i^*))< val(\tau(\pi_j^*))$, or\\
$\bullet$ $val(\tau(\pi_i^*))= val(\tau(\pi_j^*))$ and $\tau(\pi_i^*)$ is lexicographically smaller than $\tau(\pi_j^*)$, or \\
$\bullet$ $\tau(\pi_i^*)=\tau(\pi_j^*)$ and $\pi_i^*$ is lexicographically smaller than $\pi_j^*$.

Here is the succinct description of our universal rendezvous algorithm {\tt URV}: 
it is the modification of  Algorithm {\tt RV} from \cite{BP} consisting of replacing the ordering $(\pi_1,\pi_2,\dots)$ of finite paths starting at the node $v$
by the ordering $(\pi_1^*,\pi_2^*,\dots)$ yielded by Algorithm {\tt UTH}. Parts 1. and 3. remain unchanged.
Recall that $W(v_1,v_2)=\max(w(v_1,v_2),w(v_2,v_1))$.

\begin{theorem}
Algorithm {\tt URV} accomplishes rendezvous in arbitrary connected graphs and its time complexity on an instance where agents start at nodes $v_1$, $v_2$ and have labels from the set $\{1,\dots, L\}$, is polynomial in
$\max(W(v_1,v_2),\log L)$ (more precisely, it is $O((W(v_1,v_2) \cdot \log L)^3)$).
\end{theorem}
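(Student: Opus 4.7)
The plan is to reduce the theorem to the analysis of algorithm {\tt RV} from \cite{BP}, exploiting the observation (made explicit in the paper) that the correctness and complexity of {\tt RV} depend on the ordering of finite paths only through the index $k$ of the critical path of the relevant agent. Since {\tt URV} differs from {\tt RV} only by replacing the length-first ordering $(\pi_1,\pi_2,\dots)$ with the ordering $(\pi_1^*,\pi_2^*,\dots)$ produced by {\tt UTH}, it suffices to (i) re-choose the critical path in a form appropriate for graphs with infinite-degree nodes, (ii) verify that the {\tt RV} argument still applies, and (iii) re-estimate $k$ in the new ordering.

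For (i) and (ii), I would take the critical path of agent $A_i$ starting at $v_i$ to be the lexicographically smallest path from $v_i$ to $v_{3-i}$ whose type equals $c(v_i,v_{3-i})$; such a path exists because the graph is connected. The tape definition and the bit-processing rules (parts 1 and 3 of {\tt RV}) are unchanged in {\tt URV}, and the synchronization argument from \cite{BP} uses the critical path only through the facts that it terminates at the counterpart's starting node and that its processing occurs in a specific segment---both still hold here. Hence the argument yields an agent $A$ for which rendezvous occurs no later than the end of the execution of $A$'s critical segment.

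For (iii), I would bound the index $k$ of this critical path in the ordering $(\pi_1^*,\pi_2^*,\dots)$. Since its type has value exactly $w(v_i,v_{3-i}) \le W(v_1,v_2)$, it is preceded in this ordering only by paths whose type value is at most $W(v_1,v_2)$. The inner calculation of the proof of Theorem~\ref{thm:uth} already showed that the total traversal time of all such paths is $O(W(v_1,v_2))$, so their number is $O(W(v_1,v_2))$ as well. Therefore $k=O(W(v_1,v_2))$; the bit count processed by $A$ up to the end of its critical segment is $N \le c k \log L = O(W(v_1,v_2)\log L)$ for some absolute constant $c$, and the time of {\tt URV} is $\sum_{j=1}^N 3j^2 = \Theta(N^3) = O((W(v_1,v_2)\log L)^3)$, which is polynomial in $\max(W(v_1,v_2),\log L)$.

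The main obstacle I anticipate is checking that the technical constraint $a_i \ge 2|\pi_j^*|$ implicit in the bit-processing rule (so that the waiting slot $a_i - 2|\pi_j^*|$ is non-negative) is preserved under the new ordering. By the {\tt UTH} analysis, the $j$-th path in $(\pi_1^*,\pi_2^*,\dots)$ has type value $O(j)$ and hence length $|\pi_j^*|=O(\log j)$; since $a_i=3i^2$ grows quadratically with the bit index $i$, which itself dominates the containing segment index $j$, the constraint holds for all sufficiently large $i$, and any remaining small cases can be absorbed into the constant multiplier of $\log L$ bounding the segment length.
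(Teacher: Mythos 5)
Your proof follows essentially the same route as the paper's: invoke the critical-segment analysis of Algorithm {\tt RV} from \cite{BP} as a black box, bound the index $k^*$ of the critical path of the relevant agent in the {\tt UTH} ordering by $O(W(v_1,v_2))$ (via the path count implicit in the proof of Theorem~\ref{thm:uth}), and conclude from $N^*\le ck^*\log L$ and $\sum_{j\le N^*}3j^2=\Theta(N^{*3})$. The only difference is that you explicitly re-choose the critical path to be one whose type equals $c(v_i,v_{3-i})$ and verify the slack $a_i\ge 2|\pi_j^*|$ --- details the paper leaves implicit, but which are in fact needed for its claim $k^*\le w(v_1,v_2)$, since the lexicographically smallest shortest path used in \cite{BP} could have a large type value in the new ordering.
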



\begin{proof}
We show how to modify the proof of correctness and complexity of Algorithm {\tt RV} from \cite{BP} to get a similar result for our universal algorithm {\tt URV}.
  Let $A_1$ and $A_2$ be two agents with labels from the set $\{1,\dots, L\}$, starting at nodes $v_1$ and $v_2$, respectively. 
  Let $A$ be the agent, such that rendezvous occurs by the end of the execution of the critical segment of this agent (cf. \cite{BP}).
  Without loss of generality, let $A=A_1$.
  Let $(\pi_1^*,\pi_2^*,\dots)$ be the ordering of finite paths starting at $v_1$,  yielded by Algorithm {\tt UTH}.
  Let $k^*$ be the index of the critical path of $A$ in this ordering. Let $N^*$ be the number of bits of the tape of $A$ processed until the end of the critical segment of $A$.
  
  Similarly as in \cite{BP}, $N^*\leq ck^*\log L$, for some positive constant $c$. 
  By the definition of $(\pi_1^*,\pi_2^*,\dots)$ we have $k^*\leq w(v_1,v_2) \leq W(v_1,v_2)$.
  Following the estimate from \cite{BP} on the time of Algorithm {\tt RV}, we get the following estimate on the time of {\tt URV}:
$\sum_{j=1} ^{N^*} a_j=\sum_{j=1} ^{N^*} 3j^2=\frac{N^*(N^*+1)(2N^*+1)}{2}$.
It follows that the time complexity of our Algorithm {\tt URV} is $O((W(v_1,v_2) \cdot \log L)^3)$ and hence it is polynomial in $\max(W(v_1,v_2) , \log L)$.
  \end{proof}

The following proposition shows that Algorithm {\tt URV} is polynomial in a lower bound on the time of the rendezvous problem.

\begin{proposition}
No universal rendezvous algorithm can have better time complexity than $\Theta(\max(W(v_1,v_2),\log L))$.
\end{proposition}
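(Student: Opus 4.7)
The plan is to prove the two component lower bounds $\Omega(W(v_1,v_2))$ and $\Omega(\log L)$ separately, using different instance families. Since a universal algorithm must handle both families, its worst-case time is at least the maximum, which is the claimed $\Omega(\max(W(v_1,v_2),\log L))$.

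For the $\Omega(W(v_1,v_2))$ bound I lift Proposition~\ref{lb} from treasure hunt to rendezvous via a delayed wake-up. Work in the infinite tree $T$ with all nodes of infinite degree, with port labellings chosen so that port $j$ at the base $b$ leads to a neighbor $t_j$, and port $1$ at every $t_j$ leads back to $b$. Fix $i \geq 1$ and consider the candidate instances with $v_1 = b$ and $v_2 = t_j$ for $j \in \{i+1,\dots,2i\}$; each such instance has $W(v_1,v_2) = 2j = \Theta(i)$. The adversary wakes $A_1$ at round $1$ and keeps $A_2$ asleep until round $2i$. During this interval $A_2$ sits at $t_r$, so a meeting can occur only if $A_1$ visits $t_r$. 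The key observation is that $A_1$'s trajectory in its first $2i-2$ rounds is independent of $r$: because $A_2$ is silent while asleep, $A_1$ sees only the local port structure of $T$, which is identical across the $i$ candidates. Since visiting $k$ distinct neighbors of $b$ requires at least $2k-1$ moves, $A_1$ reaches at most $i-1$ of the $i$ candidates in $2i-2$ rounds; the adversary simulates $A_1$ and picks $r$ among the unvisited ones. Then no meeting has occurred by round $2i-1$, yielding rendezvous time $\Omega(i) = \Omega(W(v_1,v_2))$.

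For the $\Omega(\log L)$ bound I use a symmetry/counting argument on the graph $G$ consisting of two nodes $v_1, v_2$ joined by a single edge with the port numbered $1$ at each endpoint. In $G$, every entry-port observation is the constant $1$, so the first $T$ moves (each either ``stay'' or ``cross'') of an agent executing the rendezvous algorithm form a sequence in $\{0,1\}^T$ that depends only on the agent's label. If $T < \log_2 L$, then by the pigeonhole principle there exist distinct labels $\ell_1 \neq \ell_2 \in \{1,\dots,L\}$ inducing the same $T$-round move sequence. The adversary assigns them to the agents at $v_1$ and $v_2$ and wakes both simultaneously. In each round the two agents either both stay, remaining at distinct nodes, or both cross the edge, which by the model is not a rendezvous. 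Hence no meeting occurs in the first $T$ rounds, giving $\Omega(\log L)$. The only subtle point in the whole argument is verifying that $A_1$'s pre-wake-up trajectory in the infinite-tree construction truly does not leak the identity of $r$; this is immediate from the determinism of the algorithm and the silence of a sleeping agent, so I expect no real obstacle.
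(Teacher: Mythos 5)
Your proof is correct and takes essentially the same approach as the paper: the $\Omega(\log L)$ bound via the two-node graph (which the paper simply cites from \cite{BP}, whereas you spell out the pigeonhole/mirroring argument), and the $\Omega(W(v_1,v_2))$ bound by keeping the second agent asleep in the infinite tree of infinite degrees so that the problem reduces to the treasure-hunt lower bound of Proposition~\ref{lb}, whose counting argument you re-derive inline rather than invoke as a black box.
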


\begin{proof}
As observed in \cite{BP}, $\Omega (\log L)$ is a lower bound on the time of rendezvous even in the two-node graph. On the other hand,
the order of magnitude $\Theta(W(v_1,v_2))$ cannot be beaten by the execution time of any universal rendezvous algorithm on the tree $T$ all of whose nodes have infinite degree. Indeed, without loss of generality, assume that $w(v_1,v_2) \geq w(v_2,v_1)$.
The adversary does not wake up the agent $A_2$ starting at node $v_2$ until the other agent meets it. Thus agent $A_2$ behaves like the treasure
in the treasure hunt problem.
Hence, the proof is concluded by
Proposition \ref{lb}.
\end{proof}

\section{Conclusion}

We considered treasure hunt and rendezvous for arbitrary connected graphs with possibly countably infinite node degrees. 
We considered anonymous graphs, i.e., graphs without node labels. Of course, adding labels of nodes would not invalidate our positive results, as such labels can be simply ignored by the agents. On the other hand, our lower bounds are still valid for labeled graphs.

While the time of our treasure hunt algorithm has been proved to have optimal order of magnitude, the complexity of the rendezvous algorithm is probably not optimal.
Finding an optimal rendezvous algorithm is a challenging open problem, even for finite graphs.






\end{document}